\newtheorem{definition}{Definition}
\DeclareMathOperator*{\argmax}{arg\,max}
\newtheorem{theorem}{Theorem}[]
\newtheorem{lemma}[theorem]{Lemma}
\title{Strategy-Proof Spectrum Allocation among Multiple Operators}
\author[1]{Indu Yadav}
\author[2]{Ankur A. Kulkarni}
\author[1,3]{Abhay Karandikar}
\affil[1]{Department of Electrical Engineering, Indian Institute Technology Bombay, India 400076\authorcr Email: {$\lbrace$indu, karandi$\rbrace$@ee.iitb.ac.in}}
\affil[2]{Systems and Control Engineering, Indian Institute Technology Bombay, India 400076\authorcr Email: {kulkarni.ankur@iitb.ac.in}}
\affil[3]{Director and Professor, Indian Institute Technology Kanpur, India 208016\authorcr Email: {karandi@iitk.ac.in}}
\begin{document}


\maketitle

\begin{abstract}
To address the demand of exponentially increasing end users efficient use of limited spectrum is a necessity. For this, spectrum allocation among co-existing operators in licensed and unlicensed spectrum band is required to cater to the temporal and spatial traffic variations in the wireless network. 
In this paper, we consider multiple operator spectrum allocation problem via auctions. The classical Vickrey-Clarke-Grooves (VCG) approach provides a strategy-proof and social welfare maximizing auction at the cost of high computational complexity which makes it intractable for practical implementation. We propose a sealed bid auction for spectrum allocation, which is computationally tractable and can hence be applied as per the dynamic load variations of the network. We  show that the proposed algorithm is strategy-proof. Simulation results are presented to exhibit the performance comparison of the proposed algorithm and the VCG mechanism.

\end{abstract}


\section{Introduction}
The telecom market has been growing steadily since the past few decades. Based on studies by the Ericsson Mobility Report, 98 million new mobile subscriptions were reported globally in the first quarter of 2017~\cite{barboutov2017ericsson}. It is also predicted that the number of cellular Internet of Things (IoT) connections is expected to reach $3.5$ billion by $2023$ with Long Term Evolution (LTE) constituting major chunk of the market. To fulfill the requirements of the exponentially increasing end users, additional spectrum is needed. This also requires more base station deployment in the region. However, it is well known that the wireless spectrum is a scarce and limited resource.


 \par Traditionally, the spectrum is allocated 
for large period of time (one or more years) to the service providers using auctions.
 The operators demand spectrum as per the peak time requirement of the network. This leads to inefficient usage of the scarce spectrum resources as it has been observed that most of the time full spectrum chunk is not in use. To overcome the inefficient spectrum usage in LTE cellular bands or below 1GHz, allocation should be done as per the spatial and temporal traffic conditions of the network. For this, a computationally efficient spectrum allocation mechanism is required which can be implemented on the fly for dynamic load variations in the network. In our work, we focus on spectrum allocation mechanism among multiple operators via a computationally efficient auction.

\par Spectrum allocation among the service providers is commonly performed using a sealed bid auction format. In sealed bid auctions, each service provider sends the demand along with the price valuation in a closed envelope to the auctioneer. This ensures that the information of its demand and the private valuation is not known to other service providers or participants of the auction. However, in auctions, strategy-proofness \cite{krishna2009auction} is a major concern. Strategy-proofness implies that the participants in the auction should not gain by deviating from its true valuation. In other words, one should not gain by mis-reporting about one's valuation or demand. In general, the objective behind the spectrum auctions is social-welfare maximization instead of revenue maximization. Social-welfare maximization ensures that spectrum is allocated to those who value it most due to limited availability. Ideally, in spectrum auction mechanisms one may want an algorithm that satisfies three properties: strategy-proofness, social-welfare maximization  and computational feasibility. Vickrey-Clarke-Grooves (VCG) \cite{vickrey1961counterspeculation,clarke1971multipart,groves1973incentives} is one algorithm for spectrum allocation that satisfies strategy-proofness and provides optimal social welfare but it is not computationally efficient. 
\par The reason behind the emphasis on the strategy-proof property in the auctions is to make the process of resource allocation easier for the auctioneer as well as for the participants. It removes the time and computational overhead involved in strategy making for the optimal bid value determination from the user's perspective as bidding at true valuation is always the dominant strategy. This ensures bidding at true valuation will always lead to the maximum utility gain. This encourages more participants in the auction. From the auctioneer's perspective more participants and true bidding would help to achieve better revenue in the auction.  This is also beneficial in terms of conducting the repeated auctions of short duration, with the elimination of time overhead caused to strategize the mechanism to manipulate the market to gain the utility with respect to other bidders in the auction.  

\par Current approaches \cite{zhou2008ebay} \cite{wu2011small} in spectrum  allocation assume one base station per operator. However, in any real scenario multiple operators provide services using multiple base stations deployed (distributed) across the region.  Here, spectrum valuation varies across the base stations  individually for an operator depending on traffic conditions of the location of the base station. Therefore, valuation of an operator in the region depends on  the set of base stations of the operator who are allocated channels. Thus, the spectrum allocation among the base stations of the multiple operators becomes a multi-parameter environment problem. 
\par In our work, we propose an algorithm which achieves strategy-proofness and is computationally feasible at the cost of optimal social welfare. Thanks to its computational efficiency, the algorithm can also be used for real time spectrum allocation. Unlike previous approaches, our algorithm allows multiple operators with multiple base stations.

\subsection{Related Work}
In \cite{zhou2008ebay} authors propose a sealed bid auction strategy-proof mechanism which follows monotonicity, i.e, if a base station is allocated channel at bid value $b$, then it would also get channel at any bid value $b' \geq b$, provided bid values of other base stations are kept constant. SMALL \cite{wu2011small} is another mechanism which shows improvement in allocation efficiency than the algorithm proposed in \cite{zhou2008ebay}, though some winners are \textquoteleft sacrificed\textquoteright to ensure the strategy-proofness of algorithm. In \cite{gandhi2008towards} authors propose an approach for fine grained allocation of channel (where a channel is further sliced into smaller frequencies), but it does not satisfy the strategy-proof criteria. To address the interference issue in  spectrum allocation,  auction based power allocation mechanism has been proposed in \cite{clemens2005intelligent}, which fail to be incentive compatible. In \cite{subramanian2008near}, authors proposed
algorithm based on greedy graph coloring approach for spectrum allocation that maximizes the revenue of the auction.

\par The algorithms proposed in \cite{zhou2008ebay,wu2011small} are restricted to one base station per operator, i.e., each base station individually acts as a player in the auction whereas in our work we have considered a multi-parameter environment among the multiple base stations of the operators. In our work, non co-operative behavior of operators is also considered which holds as per the practical scenarios.

\subsection{Contributions}
Our contributions in the work are as follows:
\begin{itemize}
\item We study the spectrum allocation among multiple base stations of multiple operators in a multi-parameter environment, which has not been addressed in the literature so far.
\item We propose a strategy-proof spectrum allocation mechanism in the multi-parameter environment based on the sealed bid auction format where strategy-proofness holds for a vector of bids corresponding to an operator. 
\item We perform a comparison of computational complexity between the proposed algorithm and the VCG algorithm and observe that the proposed algorithm is tractable for large number of base stations. Thus, the algorithm is practically feasible for real time allocation of spectrum.
\end{itemize}

\par The rest of the paper is organized as follows. Section \ref{sec:sys_model} describes the system model along with the problem formulation. In Section \ref{sec:mec_design}, we propose the mechanism design. In Section \ref{sec:comp_complexity}, computational complexity of the proposed algorithm is investigated in comparison to VCG. In Section \ref{sec:sim_results}, we present the simulation results. In Section \ref{sec:conclude}, we conclude the paper.

\section[\textwidth=16 cm]{System Model}
\label{sec:sys_model}
In this section, we discuss the system model for spectrum allocation among the multiple operators which provide services in a geographical region.  System model comprises of a spectrum database, auctioneer, controllers and base stations. The spectrum database contains information about the spectrum chunk available for allocation in the region. The spectrum chunk is further divided into orthogonal channels (non-overlapping frequency band) for auction. Auctioneer is responsible for channel allocation among the base stations of the operators and also takes decision about the prices to be charged from the operator for providing right to use a channel at the particular base station. The role of auctioneer is to come up with a strategy for allocation of the channel. Each operator owns a controller, which is responsible for the calculation of demand (number of channels) as well as the valuation (price willing to pay in case allocated) at each base station of the operator individually and communicates these to the auctioneer.

\begin{figure}[h]
\centering
\includegraphics[width = 0.40\textwidth]{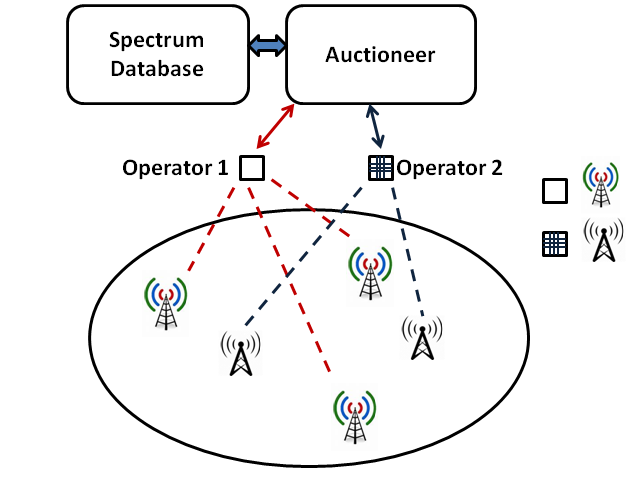}
\caption{System model}
\label{sys_model}
\end{figure}
\par Spectrum allocation model among the base stations of two operators in a given region is illustrated in Figure \ref{sys_model}. The operators $1$ and $2$ having $2$ and $3$ base stations, respectively are deployed.  Unlike the existing work in which individual base stations act as bidders, in our system model an operator acts as a bidder for the set of base stations associated with itself. Since an operator acts as an agent for its base stations, it communicates a vector of bids (valuation) to the auctioneer via the controller. The valuation of a channel at each base station may be different. Valuation at each base station indicates the price that an operator is willing to pay in return of getting the \textquoteleft right to use\textquoteright a channel at the particular base station. Therefore, for different sets of base stations an operator has different valuations, which qualifies the problem of spectrum allocation as  multi-parameter environment problem \cite{roughgarden2016twenty}. 

\par We make the following assumptions:

\begin{enumerate}
 \item We assume that an auctioneer has the knowledge of the topology in the geographical region. Therefore, the overall conflict graph consisting of all the base stations participating in the auction is available to the auctioneer.
 \item We assume all channels are homogeneous in characteristics and act as substitutes. Thus, the bid valuation is channel independent. 
 \item For simplicity, we assume that the base stations that belong to the same operator do not conflict with each other. Hence, any base station of an operator would experience interference only from the base stations of other operators in the region. 
\end{enumerate}
\par We capture the interference among the base stations of the operators with the help of a graph $\mathcal{G} = (V,\mathcal{E})$, that is obtained from the knowledge of topology in the geographical region, where $V$ represents the set of vertices (nodes), and $\mathcal{E}$ represents the set of edges in the graph. The set of vertices in the graph correspond to the base stations of various operators in the region. Any two base stations are said to interfere with each other, if the geographical distance between them is less than a predetermined value $d$. In this case, there is an edge between them in the graph. Two interfering base stations (nodes) can not be assigned same channel concurrently. 

\subsection{Strategy-Proof Auctions}
\label{problem_form}
 
 Spectrum auctions are different from the conventional auctions as the same channel can be reallocated after certain distance depending on the interference criteria. This further introduces challenges in the spectrum auctions in achieving strategy-proof auction. In conventional auctions (where an object can be provided to only one bidder), second price auction mechanism ensures strategy-proof auction. In a second price auction, bidders are arranged in the decreasing order of their bids, and the highest bidder gets the resources at the valuation of the second highest bidder in the list. However, in spectrum auctions, the second price auction mechanism no longer remains truthful \cite{zhou2008ebay}. Strategy-proof spectrum allocation among the multiple operators with multiple base stations becomes more challenging in the auctions.  All base stations associated with an operator may not be causing interference to all the base stations of the other operators in the region, then charging the second highest valuation from the highest bidding operator in the list of operator valuations does not consider the interference criteria.  
  
   VCG is another mechanism which always remains truthful by selecting the optimal outcome i.e., it chooses the winner set that maximizes the valuation of the auction. But, the pricing mechanism involves high computational complexity to achieve the strategy-proofness in the auction. This makes the VCG mechanism intractable for channel allocation in spectrum auctions with large set of base stations.

In the Section \ref{sec:mec_design}, we propose a computationally efficient and strategy-proof auction mechanism for spectrum allocation among the multiple operators. The proposed algorithm assumes only one channel is available for auction across base stations of the multiple operators providing service in the region. This mechanism can be further extended in case of multiple channel availability.

\subsection{Notations and Definitions }
 We introduce the following notations:
\begin{itemize}
 \item $N = \{1,2,\ldots,n\}$ represents the set of operators participating in the spectrum auction in the geographical region.
 \item $m_{i}$ represents the number of base stations corresponding to an operator $i$.
 \item $\mathcal{S}_i = \{S_{1}^{i},S_{2}^{i},\ldots,S_{m_i}^{i}\}$ represents the set of base stations of an operator $i$.
 \item $v_{ij}$ represents the true valuation of $S_{j}^{i}$ i.e, $j^{th}$  base station corresponding to an operator $i$ .
 \item $\vec{v}_i = \{v_{i1},v_{i2},\ldots\,v_{im_i}\}$ represents the vector of true valuation at base stations for operator $i$.
 \item $b_{ij}$ represents the bid of $S_{j}^{i}$ i.e, $j^{th}$  base station corresponding to an operator $i$.
 \item $\vec{b}_i = \{b_{i1},b_{i2},\ldots\,b_{im_i}\}$ represents the vector of bids for operator $i$.
 \item $\mathcal{N}_i$ represents the set of neighboring base stations which are in conflict with the base stations of operator $i$, such that the same channel cannot be shared simultaneously.
 \item $X_{i}$ represents the binary allocation vector corresponding to operator $i$, where $1$ represents channel allocation.
 \item $X_{i}(j)$ represents the $j^{th}$ component of $X_i$ vector.
 \item $O_i$ represents operators that are in neighbors of $i$ i.e.  ($ \{ operators ~j ~|~ S_j \bigcap \mathcal{N}_i \neq \phi, j \neq i\})$.
 \item $d_i = \{d_{i1},d_{i2},\ldots,d_{im_i}\}$ represents the number of channels required at base stations of operator $i$.
 \item $N(\mathcal{G}^{'})$ represents the set of active operators from the conflict graph $\mathcal{G}^{'}$.
\end{itemize}
\begin{enumerate}
 \item \textit{True valuation} ($\sigma_i^{v}$) : True valuation $\sigma_i^{v}$ of any operator $i$ is defined as the sum of the actual valuations (actual valuations are private and not known to the auctioneer) of all the base stations corresponding to the operator $i$.
 \begin{equation}
 \label{eqn:eqn1}
 \sigma_i^{v} = \sum_{j=1}^{m_i} v_{ij}.
 \end{equation}
 \item \textit{Bidding valuation} ($\sigma_i^{b}$) : Bidding valuation $\sigma_i^{b}$ of any operator $i$ is defined as the sum of the bids (which may or may not be same as the actual valuation) of all the base stations corresponding to the operator $i$.
  \begin{equation}
  \label{eqn:eqn2}
 \sigma_i^{b} = \sum_{j=1}^{m_i} b_{ij}.
 \end{equation}
 \item \textit{Price} ($p_{i}$): It is defined as the price that an operator $i$ has to pay, in case operator $i$ wins the resources (channels), else it would be zero.

 \item \textit{Operator Utility $(\mathcal{U}_i)$} : Utility of an operator $i$ is the difference between the operator valuation (unknown to the auctioneer) and the price charged on the allocation of the channel. If the operator does not get the channel, the utility would be zero. In other words, it gives the overall gain of an operator if it is allocated channels.
 \begin{equation}
 \label{eqn:eqn4}
 \mathcal{U}_i(b_i,b_{-i}) =  \begin{cases} \sigma_i^{v} - p_{i}& ~\text{if the channel is allocated} \\ 0 & ~\text{otherwise}. \end{cases}.
 \end{equation}
\end{enumerate}
 where, $b_i$ is the bid vector of operator $i$ and $b_{-i}$ bid vectors of other operators except $i$.
\begin{definition} \label{def: strategy-proofness}
 An auction is truthful (strategy-proof) if there is no incentive in  deviating  from  the  true  valuation. Thus,  the dominant  strategy  is  to  bid  at the true valuation no matter what strategy others are choosing.
\end{definition}
\begin{equation}
 \label{eqn:eqn5}
 \mathcal{U}_{i}(b_i, b_{-i}) \leq \mathcal{U}_{i}(v_i, b_{-i}) \quad \forall b_{i}, \& b_{-i}.
 \end{equation}
 where, $v_i$ is the vector of true valuations at the base stations of an operator $i$.

\section{Mechanism Design}
\label{sec:mec_design}
In this section, we describe the mechanism design for channel allocation among the base stations of multiple operators. Recall the assumptions made in previous sections, multiple base stations correspond to an operator and one channel availability. In auctions, the mechanism design has two steps: channel allocation and price charging strategy. In channel allocation auctioneer decides whom the right to use the channel is provided, and what price should be charged is decided in pricing strategy. The price charged enforces the operators to declare the true valuation to ensure strategy-proof auction. Now, we define critical operator which is used later in the price charging strategy.
\begin{definition} \label{def: critical neighbor}
A critical operator $C(i)$ of an operator $i$ is defined as the operator in $\mathcal{N}_i$, whose sum of bids is maximum among all the operators in $\mathcal{N}_i$ except $i$.
\end{definition}
\begin{equation}
 \label{eqn:eqn6}
 \begin{split}
 C(i) = \Big\{ j \in O_i | \sum_{k \in \{ \mathcal{N}_i\bigcap S_j\}} b_{jk} \geq \sum_{k \in \{ \mathcal{N}_i\bigcap S_j'\}}b_{j'k},\\
 \quad \forall j' \neq j,~i  ~\& j' \in O_i\Big\}.
 \end{split}
 \end{equation}
 Let us define a set $\mathcal{L}_{k}^{i}= \mathcal{N}_i \cap S_k$, which contains the base stations of operator $k$ in conflict with the base stations of the operator $i$. Let $\Lambda_k^{i}$ be the valuation of set $\mathcal{L}_{k}^{i}$ which is given as, $\Lambda_k^{i} = \sum b_{kj}\mathbbm{1}_{\{S_{kj} \in \mathcal{L}_{k}^{i} \}}$. The critical operator of an operator $i$ can be obtained as, $C(i) = \argmax \limits_{k \neq i}\Lambda_k^{i}, ~k \in O_i$ and the critical operator valuation $\sigma_i^{c}$ is given as, $\sigma_i^{c} = \max \limits_{k\neq i} \Lambda_k^{i},k \in O_i$.
 
 
 The strategy-proof algorithm proposed is described in Algorithm \ref{channel_allocation}. 
\begin{algorithm}
\caption{Strategy-proof auction mechanism}
\label{channel_allocation}
\begin{algorithmic}[1]

\State \textbf{Input: }Conflict Graph $\mathcal{G}$, bid vector, $ \{ \vec{b}_i \}_{\{ i \in N \}}$.
 \State\textbf{Output: }Binary channel allocation vector $\{ X_i\}_{\{ i \in N \}}$, payment $\{p_i\}_{\{ i \in N \}}$.
 \State Initialize vector $X_i \leftarrow \vec{0}$, $N(\mathcal{G})= \{ 1,2,\ldots, N\}$
\State Initialize $p_i \leftarrow 0$, $\mathcal{G^{'}} \leftarrow \mathcal{G}$, $N(\mathcal{G}^{'})= N(\mathcal{G})$, $FLAG = True$.
 \While  {$(FLAG = True)$}
   \State Calculate $i^* = \argmax \limits_{ \{ i \in N(\mathcal{G}^{'})\}}\sigma_i^{b}$.
   \State Find $\mathcal{N}_{i^*}$.
   \State Calculate $C(i^*) = \argmax \limits_{k \neq i^*}\Lambda_k^{i^*},~k \in O_i*$ and $\sigma_{i^{*}}^{c} = \max \limits_{k\neq i^*} \Lambda_k^{i^*}~k \in O_i*$.
   \State Make $p_{{i}^*} \leftarrow \sigma_{i^{*}}^{c}$ and Make $X_{{i}^*} \leftarrow 1$.
     \If {$ (\mathcal{G}^{'} \cap (S_{i^*} \cup \mathcal{N}_{i^*}) = \mathcal{G}^{'})$}
       \State  $ FLAG \leftarrow False. $
     \Else 
       \State $\mathcal{G}^{'} \leftarrow \mathcal{G}^{'} \backslash \{ S_{i^*} \cup \mathcal{N}_{i^*} \}$.
     \EndIf 

\EndWhile
\end{algorithmic}
\end{algorithm}
  This algorithm takes conflict graph $\mathcal{G}$ and bid vector corresponding to each operator $ \{ b_i \}_{\{ i \in N \}}$. Binary channel allocation vector $\{ X_i\}_{\{ i \in N \}}$ and the payment vector $\{p_i\}_{\{ i \in N \}}$ for all the operators are initialized to zero. Algorithm starts with determining the maximum bidding operator and its critical neighbor $C(i^*) = \argmax \limits_{k \neq i}\Lambda_k^{i^*}, ~k \in O_i*$. Channel allocation vector, $\{ X_i\}$ for the maximum bidding operator (winner) is updated to $1$ and the payment for the winning operator is updated to the price of the critical neighbor valuation, $\sigma_{i^*}^{c}$. The conflict graph $\mathcal{G}^{'}$ is updated with the remaining nodes after the removal of the nodes corresponding to the winning operator $i^*$ and its neighboring nodes $\mathcal{N}_{i^{*}}$. Repeat the process until $\mathcal{G}^{'}$ is NULL.

\par \textbf{ Example:} Consider a network of $3$ operators $A, B, C$, where each operator has $3$ base stations deployed in the region to provide services to the subscribers. The base stations $A_1, A_2, A_3$, $B_1, B_2, B_3$ and $C_1, C_2, C_3$ correspond to the operators $A$, $B$ and $C$ respectively. Conflict graph is illustrated in the Figure \ref{algo_1} based on the interference criteria.
 
 \begin{figure}
        \centering
        \begin{subfigure}[b]{\linewidth}
        \centering
                \includegraphics[width=0.7\textwidth]{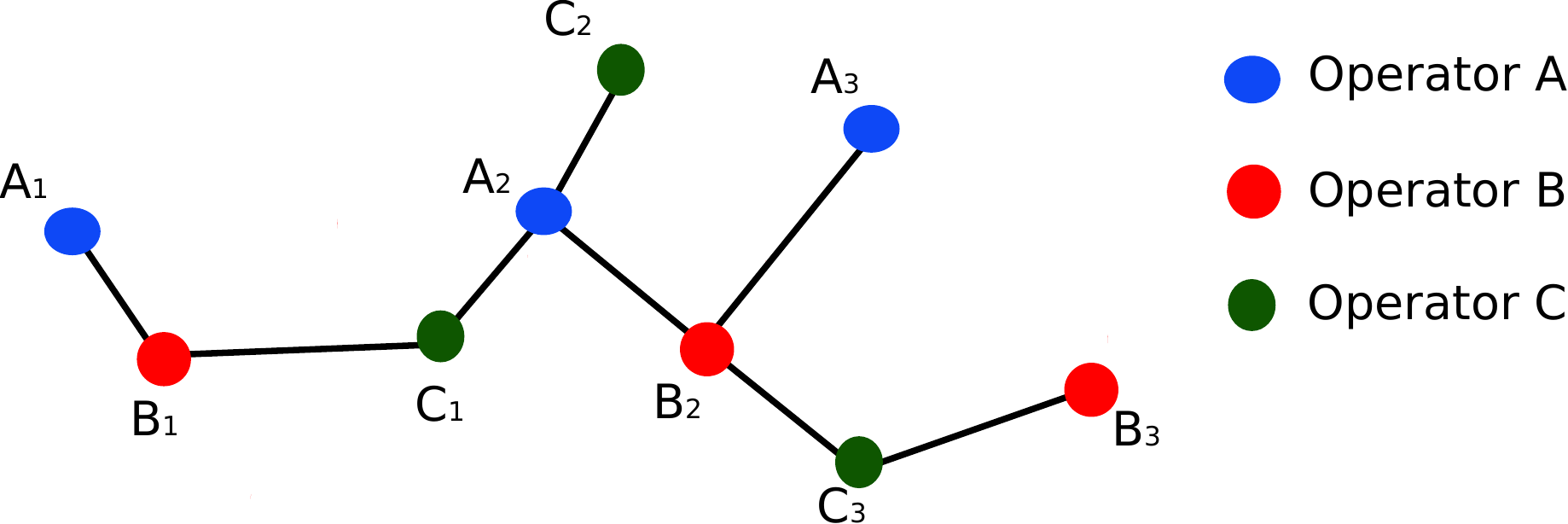}
                \caption{}
                 \label{algo_1}
        \end{subfigure}\\
        
      \begin{subfigure}[b]{\linewidth}
      \centering
                \includegraphics[width=0.6\textwidth]{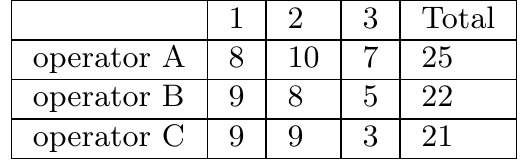}
                \caption{}
              \label{t:bid_vector}
        \end{subfigure}
        \caption{ Network of $3$ operators (a) Conflict Graph (b) Bid vector table corresponding to operator A, B and C.}
\end{figure}
 \par In the table of Figure \ref{t:bid_vector}, bid vector of each operator is presented. Operator $A$ has the highest bid value of $\sigma_A^b = 25$. Therefore, Operator $A$ will get channel at its base stations and it has to pay the price of its critical operator. As per the Definition (\ref{def: critical neighbor}), critical operator for winning operator $A$ is operator $C$ and $p_A = \sigma_A^c = 18$. Thus, the utility of operator $A$, $\mathcal{U}_A = 7$.  Update the conflict graph with the base stations of operator $B$ and $C$ not in conflict with base stations of operator $A$. 
 In second iteration, the updated $\mathcal{G}$ comprises of base stations $B_3$ and $C_3$. Operator $B$ wins the channel and pays the price, $ \sigma_B^c = 3$. The utility of operator $B$ is $2$. Operator $C$ does not get channel. 
 
 Now, if Operator $B$  tries to increase its utility by deviating from its true valuation $\sigma_B^v = 22$ to $\sigma_B^b = 28$ by increasing the bid of its base stations. Operator $B$ will get channel being the highest bidder among the operators. But, it has to pay the price of its critical operator which is operator $A$ and pays $ \sigma_B^c = 25$. This leads to negative utility $-3$, for operator $B$. Thus, bidding at true valuation is the best strategy for an operator in the auction. 
 Next, we prove that the proposed algorithm follows monotonicity and strategy-proofness.

 \begin{lemma}
 If any operator $i$ is allocated a channel by bidding at $\sigma_i^{b}$, it will also be allocated if it bids $\sigma_i^{b'}$, where $\sigma_i^{b'} \geq \sigma_i^{b}$ provided all the other operators bids remain unchanged.
\end{lemma}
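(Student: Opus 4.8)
The plan is to read Algorithm \ref{channel_allocation} as a greedy procedure that, at each iteration, selects the active operator of maximum bidding valuation, allocates it the channel, and then deletes the winner together with its conflicting neighbours from $\mathcal{G}'$. I would fix the bids $b_{-i}$ of all operators other than $i$, and compare two executions of the algorithm: the run $R$ in which $i$ bids $\sigma_i^{b}$, and the run $R'$ in which $i$ bids $\sigma_i^{b'} \geq \sigma_i^{b}$. In $R$ operator $i$ is selected at some iteration $t$; the goal is to show that $i$ is also selected in $R'$.

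The key structural observation I would record first is that the deletion step $\mathcal{G}' \leftarrow \mathcal{G}' \setminus (S_{i^*} \cup \mathcal{N}_{i^*})$ depends only on the identity of the current winner and on the fixed graph $\mathcal{G}$, not on any bid value. Consequently, if two executions choose the same sequence of winners, they pass through exactly the same sequence of residual graphs $\mathcal{G}'$ and the same sets of active operators. Moreover, raising only $i$'s bid cannot lower the bidding valuation of any other operator, so at any fixed residual graph the $\argmax$ computed in line~6 is either $i$ itself or else coincides with the winner that $R$ would pick.

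With these two facts I would run an induction on the iteration count to couple $R'$ to $R$. The inductive claim is: for every iteration $s \leq t$, either $i$ has already been selected in $R'$ (in which case it wins and we are done), or the first $s-1$ winners of $R'$ are identical to those of $R$, so the residual graph and the active-operator set at the start of iteration $s$ coincide in the two runs. In the base and inductive steps, since for $s < t$ the winner $w_s$ chosen in $R$ is not $i$, the $\argmax$ in $R'$ at the (identical) residual graph is either $i$ (done) or exactly $w_s$, preserving synchronisation. If synchronisation persists up to iteration $t$, then $i$ is still active in $R'$ (it was active through iteration $t$ in $R$) and, because $\sigma_i^{b'} \geq \sigma_i^{b} \geq \sigma_j^{b}$ for all active $j$, operator $i$ is again the $\argmax$ and is selected. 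In every case $X_i = 1$ in $R'$.

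I expect the main obstacle to be precisely this coupling: because increasing $i$'s bid can reshuffle the entire list of subsequent winners, one cannot argue that the two runs follow the same execution. The induction sidesteps this by only requiring the two runs to agree up to the moment $i$ is selected, which is exactly when the $\argmax$ first becomes $i$ or reaches its original selection round. A minor point I would address explicitly is the tie-breaking rule in line~6 and whether the valuation compared there is read as the full sum $\sigma_i^{b}$ or as the sum over the surviving base stations; the coupling argument goes through under both readings, since only $i$'s valuation changes and the residual graphs remain identical along the coupled prefix.
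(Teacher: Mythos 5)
Your proof is correct, and it is in fact a more careful argument than the one the paper gives. The paper argues statically: it sorts operators by $\sigma^{b}$, notes that raising $i$'s bid while holding the others fixed can only move $i$ from position $k$ to some position $l \leq k$, and then simply asserts that moving up the list ensures $i$ still receives the channel. That last assertion is exactly the non-trivial point, since winning is not determined by list position alone but by the dynamic deletion of base stations and neighbours from $\mathcal{G}'$; the paper offers no justification for it. Your execution-coupling induction supplies precisely this missing step: you show that the run $R'$ with the higher bid either selects $i$ early (done) or reproduces the winners of the original run $R$ iteration by iteration, hence passes through the same residual graphs (because deletions depend only on the winner's identity and the fixed graph $\mathcal{G}$, not on bid values), so that at the synchronised iteration $t$ operator $i$ is still active and, with $\sigma_i^{b'} \geq \sigma_i^{b} \geq \sigma_j^{b}$ for all active $j$, is again the $\argmax$. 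So the two proofs rest on the same monotonicity idea --- only $i$'s bid changes, and it can only move earlier --- but yours closes the gap the paper leaves open, at the cost of a slightly longer argument.

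Two caveats. First, your closing remark that the argument goes through ``under both readings'' of line 6 is too generous to the surviving-sum reading: the lemma is stated in terms of the aggregate $\sigma_i^{b}$, and a larger total can be realised by a bid vector that shifts weight onto base stations which are later deleted as neighbours of earlier winners, in which case $i$'s effective (surviving) bid at the synchronised iteration need not have increased and the conclusion can fail. The argument is airtight under the paper's actual definition, where $\sigma_i^{b}$ is the full sum over all of $i$'s base stations and hence invariant under deletions; for the other reading you would need the stronger hypothesis $b'_{ij} \geq b_{ij}$ componentwise. Second, the tie-breaking issue you flag is resolved exactly as you suggest, provided line 6 breaks ties by a fixed bid-independent rule (e.g., operator index): if $\sigma_i^{b'}$ ties some active $j$ at iteration $t$, then necessarily $\sigma_i^{b} = \sigma_j^{b}$ as well, so the tie already occurred in $R$ and was broken in favour of $i$, and the same rule breaks it in favour of $i$ in $R'$.
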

\begin{proof}
 As stated in the algorithm, all the operator bids are arranged in non-increasing order of the bids $\sigma_i^{b}, \forall i \in N$. Let us assume in the sorted list $S$  operator $i$ lies at position $k$. Now, keeping all the other operator bids unchanged, increase the bid of operator $i$ to $\sigma_i^{b'}$, and again arrange all the operator bids in non-increasing order in another sorted list $S'$. Let us say, the position of operator $i$ in $S'$ is $l$, where $l \leq k$. Thus, the operator moves higher in the position which ensures it still receives the channel. This completes the proof.
\end{proof}

\begin{theorem}Algorithm \ref{channel_allocation} is strategy-proof.
\end{theorem}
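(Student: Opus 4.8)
The plan is to invoke the standard characterisation of truthful mechanisms for a single-parameter environment: a monotone allocation rule paired with the critical-value payment is strategy-proof, the key being that a winner's price must depend only on the other agents' bids and not on its own. Although the setting is nominally multi-parameter, the algorithm treats each operator $i$ as a single-parameter agent whose effective bid is the aggregate $\sigma_i^b=\sum_{j}b_{ij}$, so with $b_{-i}$ fixed the whole argument reduces to comparing the outcome at $\sigma_i^b$ with the outcome at the truthful $\sigma_i^v$. Monotonicity of the allocation in $\sigma_i^b$ is exactly the preceding Lemma, so the effort goes into the payment rule and a two-case utility comparison.

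I would first isolate two facts. \emph{(a) A truthful winner has non-negative utility.} When $i$ is selected it maximises the aggregate bid over the currently active operators, and its critical operator $C(i)$ is an active neighbour, so $\sigma_i^b\ge\sigma_{C(i)}^b\ge\Lambda_{C(i)}^i=\sigma_i^c=p_i$; at the truthful bid this reads $\sigma_i^v\ge\sigma_i^c$ and hence $\mathcal{U}_i=\sigma_i^v-\sigma_i^c\ge0$. \emph{(b) The winning price is own-bid-independent.} The charged amount $\sigma_i^c=\max_{k\in O_i,\,k\ne i}\Lambda_k^i$ is assembled only from the bids that $i$'s neighbouring operators place on base stations in $\mathcal{N}_i$; I must argue that it is a function of $b_{-i}$ and the conflict graph alone, i.e.\ that the set of neighbours still active when $i$ is selected is unaffected by $\sigma_i^b$.

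Granting (a) and (b), I would split on the truthful outcome. \textbf{Case 1 ($i$ wins at $v_i$).} Any deviation with $\sigma_i^{b'}\ge\sigma_i^v$ still wins by the Lemma and, by (b), pays the same $\sigma_i^c$, leaving the utility unchanged; any deviation with $\sigma_i^{b'}<\sigma_i^v$ either keeps $i$ winning at the same price or drops it to utility $0\le\sigma_i^v-\sigma_i^c$. \textbf{Case 2 ($i$ loses at $v_i$).} The truthful utility is $0$, and by monotonicity winning demands an overbid $\sigma_i^{b'}>\sigma_i^v$; I would show that such a winning deviation is charged $\sigma_i^c\ge\sigma_i^v$, so its utility $\sigma_i^v-\sigma_i^c\le0$. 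Either case gives $\mathcal{U}_i(b_i,b_{-i})\le\mathcal{U}_i(v_i,b_{-i})$, which is \eqref{eqn:eqn5}.

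The hard part is the pair of structural claims behind (b) and Case 2, which must be proved against the sequential graph-pruning of Algorithm~\ref{channel_allocation} rather than a static ranking. Raising $\sigma_i^b$ moves $i$ earlier in the selection order and could leave more of its neighbours un-pruned, so I must check that this never lowers $\sigma_i^c$ in $i$'s favour. The route I would take is to fix the neighbour $w\in\mathcal{N}_i$ whose selection removes $i$ at the truthful bid, so that $\sigma_w^b\ge\sigma_i^v$; any deviation letting $i$ win must overtake $w$ in the order, leaving $w$ active at $i$'s selection and hence $\sigma_i^c\ge\Lambda_w^i$. The delicate step is to bound $\Lambda_w^i$ below by $\sigma_i^v$ using the conflict structure and the aggregate-bid comparison; pinning down this inequality, together with the bid-independence of the active set, is where the real work lies.
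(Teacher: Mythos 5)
Your proof plan has the same skeleton as the paper's argument: the paper's proof is exactly the case analysis you describe (deviate up or down, crossed with win/lose at each bid), resting on the monotonicity Lemma. Your fact (a) is correct and is actually an improvement on the paper: since $O_i$ and $\mathcal{N}_i$ are computed on the pruned graph at the moment $i$ is selected, $C(i)$ is one of the active operators over which $i$ maximized, so $\sigma_i^{b} \ge \sigma_{C(i)}^{b} \ge \Lambda_{C(i)}^{i} = p_i$; the paper needs this step (its Scenario 2, Case (iii) presumes a truthful winner has nonnegative utility) but never states it.

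However, the two claims you explicitly defer --- (b), that a winner's price is independent of its own bid, and the Case-2 bound $\sigma_i^{c} \ge \sigma_i^{v}$ for a truthful loser who wins by overbidding --- are genuine gaps, and they cannot be closed, because both claims are false for Algorithm \ref{channel_allocation}. The price $\sigma_i^{c}$ is evaluated over the neighbours still active in $\mathcal{G}^{'}$ when $i$ is selected, and $i$'s own bid determines when that happens. Against (b): take three operators with one base station each, conflict edges $i$--$k$ and $k$--$u$, true values $10$, $9$, $9.5$. Bidding truthfully, $i$ is picked first and pays $\Lambda_k^{i} = 9$ (utility $1$); if $i$ underbids to $9.4$, then $u$ is picked first, $k$'s station is pruned as $u$'s neighbour, and $i$ is then selected with no active neighbours, paying $0$ (utility $10$). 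Underbidding strictly helps, contradicting (b), condition \eqref{eqn:eqn5}, and the paper's Scenario 2, Case (ii), which tacitly assumes $p_i$ is unchanged across the two winning runs. Against the Case-2 bound: let $i$ have one station with $v_i = 10$, and let $w$ have two stations, one conflicting with $i$'s bid at $3$ and one conflict-free bid at $9$. Truthfully, $w$ (aggregate $12$) wins first and $i$ loses; if $i$ overbids to $13$ it wins and pays only $\Lambda_w^{i} = 3$, for utility $7 > 0$. This is exactly the failure you anticipated: losing truthfully only says some neighbour's aggregate exceeds $\sigma_i^{v}$, while the price involves only the conflicting portion of that neighbour's bid, which can be arbitrarily small. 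The paper's own proof does not repair this either --- its Scenario 1, Case (iii) asserts $\sigma_i^{v} < \sigma_i^{c}$ with no justification beyond ``from the Algorithm,'' and its win--win cases assume price invariance. So you located the crux precisely; the honest conclusion is that neither your plan nor the paper's proof can be completed, because the statement fails at exactly those two steps.
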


\begin{proof}
To show the strategy-proofness of the algorithm, possible scenarios can be divided into two categories:\\
\textit{Scenario 1} : Any operator $i$ tries to deviate from truthfulness by bidding greater than the true valuation i.e, $\sigma_i^{b} > \sigma_i^{v}$.
\begin{enumerate}
 \item[] \textit{Case (i)}: An operator $i$ does not win channel even after bidding untruthfully at $\sigma_i^{b}$, greater than  $\sigma_i^{v}$. Hence, it will have utility, $\mathcal{U}_i = 0$.
 \item[] \textit{Case (ii)}: An operator $i$ wins channel at its bidding valuation, $\sigma_i^{b}$ (which is greater than the true valuation) as well as its true valuation, $\sigma_i^{v}$. It will have positive utility, $\mathcal{U}_i = \sigma_i^{v} - p_i$, which is same as in case operator bids at the true valuation. Thus, bidding at higher valuation does not lead to any extra incentive.
  \item[] \textit{Case (iii)}: An operator wins channel at $\sigma_i^{b}$, but looses at $\sigma_i^{v}$. Though the operator has been allocated channel by bidding untruthfully at higher value but it has to pay the price of its critical neighbor which is greater than the price of its true valuation (from the Algorithm \ref{channel_allocation}).
  
 \begin{equation*}
 \begin{split}
   \mathcal{U}_i &= \sigma_i^{v} - p_i,\\
   &= \sigma_i^{v} - \sigma_i^{c} \quad \text{where } p_i = \sigma_i^{c},\\
   &\leq 0. \quad \because \sigma_i^{v} < \sigma_i^{c}.
  \end{split}
 \end{equation*}
  
\end{enumerate}

\noindent \textit{Scenario 2} : An operator $i$ tries to deviate from truthfulness by bidding less than the true valuation i.e, $\sigma_i^{b} < \sigma_i^{v}$.
\begin{enumerate}
 \item[] \textit{Case (i)}: Operator $i$ looses channel at $\sigma_i^{b}$ as well as its true valuation, $\sigma_i^{v}$. Thus, it will have $\mathcal{U}_i = 0$.
 \item[] \textit{Case (ii)}: Operator $i$ wins channel at $\sigma_i^{b}$ as well as its true valuation, $\sigma_i^{v}$ which follows from monotonicity. Thus, it will have $\mathcal{U}_i = \sigma_i^{v} - p_i$.
  \item[] \textit{Case (iii)} : Operator looses at $\sigma_i^{b}$, but wins bidding at $\sigma_i^{v}$. Thus, the operator is at loss by deviating to untruthful value with zero utility. But, bidding truthfully at $\sigma_i^{v}$ result in the channel allocation, with utility $\mathcal{U}_i = \sigma_i^{v} - p_i$.
\end{enumerate}
From the above scenarios, it can be seen that bidding at $\sigma_i^{b} \neq \sigma_i^{v}$, does not improve the utility of operator. Thus, $\sigma_i^{b} = \sigma_i^{v}$ is the weakly dominant strategy for any operator. This completes the proof.
\end{proof}

\section{Complexity Analysis}
\label{sec:comp_complexity}
In this section, we study the computational complexity of the proposed algorithm. Computational complexity analysis of the proposed algorithm in case of $n$ operators and a single channel available for allocation among the base stations of the operators in the region with given conflict graph $\mathcal{G} = (V,\mathcal{E})$ is as follows.

The overall complexity of the proposed algorithm can be performed in two steps, channel allocation and price charging computation from the winning operator. The algorithm takes $\mathcal{O}(n)$ time to obtain the maximum bidding operator and allocates the channel among its base stations.

To calculate the price charged by the winning operator, algorithm needs to examine the conflicting neighbors corresponding to the base stations of winning operator. In any graph $\mathcal{G} = (V,\mathcal{E})$, complexity of determining the conflicting nodes is given by $\mathcal{O}( |V| + |\mathcal{E}|)$, where $V$ is the number of vertices (nodes) and $\mathcal{E}$ corresponds to number of edges in the graph. The edges in a graph with $V$ nodes is bounded as $ \mathcal{E} \leq \frac{V(V-1)}{2}$. Let us assume $i^{th}$ operator has $m_i$ base stations and the total number of base stations is obtained as $m = \sum_{i \in N} m_i$. Since, the base stations of the same operator are not in conflict with each other, the winning operator $i$ can conflict with maximum of remaining $ m - m_i$ base stations of the given conflict graph. Therefore, the  complexity of determining critical neighbor is $\mathcal{O}\big(m_i + \frac{(m - m_i)(m - m_i-1)}{2}\big)$. The term $ m - m_i$ in the complexity of critical neighbor will keep on decreasing over the iterations, as the number of base stations allocated channel would increase. Thus, the complexity of critical neighbor evaluation is $\mathcal{O}\big(m_i + (m - m_i)^2 \big)$. On simplification, this is approximately $\mathcal{O}( m^{2})$. The complexity of algorithm for an iteration (channel allocation and price charging) is $\mathcal{O}(n + m^2)$. Therefore, the overall complexity of the algorithm is $\mathcal{O}\big(n(n + m^2)\big)$, where $n$ is the number of operators and $m$ is the total number of base stations in the given region. As the number of base stations is much larger than the number of operators, the overall complexity of the algorithm can be approximated as $\mathcal{O}(nm^2)$.

 \par The computational complexity of (VCG) \cite{vickrey1961counterspeculation,clarke1971multipart,groves1973incentives}  mechanism for optimal channel allocation is given by $\mathcal{O}(2^m -1)$, where $m$ denotes the total number of base stations across all the operators. Similarly, the computational complexity of the price charging mechanism is $\mathcal{O}(m(2^m -1))$. However, the proposed algorithm has overall complexity $\mathcal{O}(nm^2)$, which makes it efficient and feasible even for large set of base stations of the operators.
\section{Simulation Results}
\label{sec:sim_results}
In this section, we compare the performance of the proposed algorithm with that of the VCG mechanism. The simulations are performed in MATLAB \cite{moler1982matlab}. As discussed in Section \ref{sec:sys_model} the system model has multiple operators in the region bidding for its base stations. In Figure \ref{perform_comp}, simulations results for three operator scenario is compared with those of the classical VCG algorithm for the following parameters:
\begin{itemize}
\item Allocation Efficiency: It is defined as the total number of base stations which are assigned channels across all the operators i.e., $\sum \limits_{i \in N} x_{i}$, where $x_i$ denotes the allocation vector of operator $i$.
\item Social Welfare: Social welfare is defined as the sum of the bids corresponding to the base stations which are allocated channels. 
\end{itemize}
 
\par We consider LTE base stations of all the operators are uniformly distributed across the region with one channel availability in spectrum database. From the distribution, $\mathcal{G}$ is generated. VCG mechanism computes the sets of feasible outcomes, $\Omega$ which comprises of all the independent set from the graph. Based on the bids, the set of all the weighted independent feasible outcomes is obtained and then the the maximum weighted independent set is chosen. 

 It has been observed from the Figure \ref{perform_comp} that the allocation efficiency
(number of base stations that are allocated channels) and the social welfare is near to the optimal value obtained from the VCG mechanism. The reduced complexity of the proposed algorithm makes it suitable for the spectrum allocation among the large set of base stations of multiple operators. Thus, auction can be performed in short durations based on the dynamic load variation in the network.
\begin{figure}
        \centering
        \begin{subfigure}[h]{\linewidth}
        \centering
                
                \includegraphics[width=0.65\textwidth]{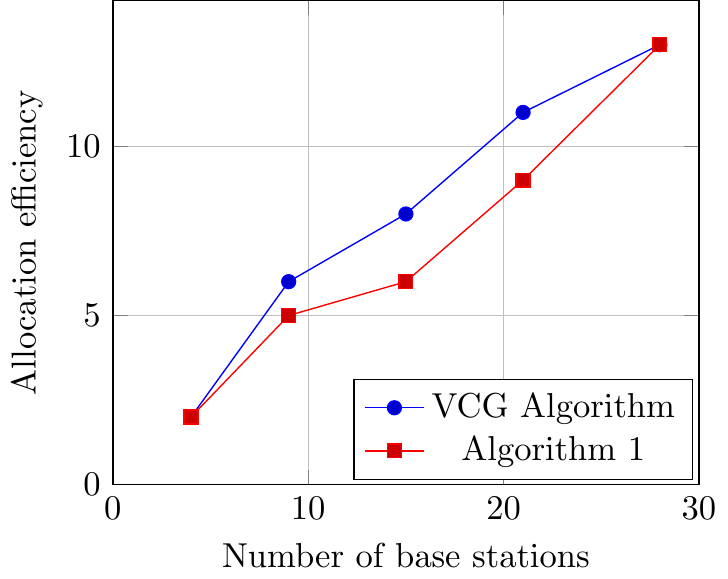}
                \caption{}
                \label{alloc_eff}
        \end{subfigure}\\
        
      \begin{subfigure}[h]{\linewidth}
      \centering
               
                \includegraphics[width=0.65\textwidth]{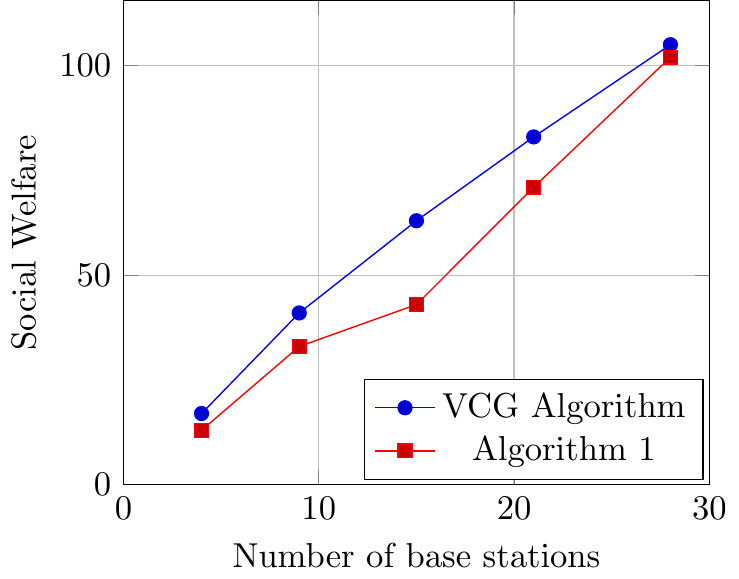}
                \caption{}
              \label{soc_wel}
        \end{subfigure}
\caption{ Performance comparison of the VCG and Algorithm $1$ in three operator scenario (a) Allocation efficiency (b) Social welfare.}
\label{perform_comp}
\end{figure}

\section{Conclusions}
\label{sec:conclude}
In this work, we study the problem of spectrum allocation in multi-parameter environment and propose an algorithm for spectrum allocation among the multiple base stations of multiple operators. We prove that the proposed algorithm guarantees strategy-proofness. We also present computational complexity of the proposed algorithm and prove that the proposed algorithm can be implemented in polynomial time $\mathcal{O}(nm^2)$, where $n$ is number of operators and $m$ represents the total number of base stations across all the operators. Thus, the proposed algorithm is feasible for practical implementation in real time spectrum auctions. Using simulation results, we observe that the performance of the proposed algorithm is near-optimal in terms of allocation efficiency and social welfare, and it solves the intractability issue of VCG mechanism. 

\bibliographystyle{IEEEtran}
\bibliography{refer}


\end{document}